\newtheorem{definition}{Definition}
\newtheorem{theorem}{Theorem}
\newtheorem{proposition}{Proposition}
\newtheorem{corollary}{Corollary}
\newcommand{\spk}[1]{sp\langle #1 \rangle}
\newcommand{\spfd}{\rightarrow_{sp}}
\newcommand{\nul}{\texttt{NULL}}
\newenvironment{proof}{\textsc{Proof:}}{\hfill$\Box$}
\begin{document}

\title{Approximate Keys and Functional Dependencies in Incomplete Databases With Limited Domains--Algorithmic Perspective}
\author{
Munqath Alatar\\
ITRDC, University of Kufa, Iraq\\
\texttt{munqith.alattar@uokufa.edu.iq}\\
\AndAuthor
Attila Sali\\
Alfr\'ed R\'enyi Institute of Mathematics\\
and Department of Computer Science,\\
Budapest University of Technology and Economics\\
Budapest, Hungary\\
       \texttt{sali.attila@renyi.hu}
}
  \titleodd{Approximate Keys and Functional Dependencies\ldots}
  \authoreven{M. Alatar et al.}
 \keywords{Strongly possible functional dependencies, Strongly possible keys, incomplete databases, approximate functional dependencies, approximate keys.}
  \received{February 1, 2023}

  \abstract{A possible world of an incomplete database table is obtained by imputing values from the attributes (infinite) domain to the place of \texttt{NULL} s.  A table satisfies a possible key or possible functional dependency constraint if there exists a possible world of the table that satisfies the given key or functional dependency constraint. A certain key or functional dependency is satisfied by a table if all of its possible worlds satisfy the constraint. Recently, an intermediate concept was introduced. A strongly possible key or functional dependency is satisfied by a table if there exists a strongly possible world that satisfies the key or functional dependency. A strongly possible world is obtained by imputing values from the active domain of the attributes, that is from the values appearing in the table. In the present paper, we study approximation measures of strongly possible keys and FDs. Measure $g_3$ is the ratio of the minimum number of tuples to be removed in order that the remaining table satisfies the constraint. We introduce a new measure $g_5$, the ratio of the minimum number of tuples to be added to the table so the result satisfies the constraint. $g_5$ is meaningful because the addition of tuples may extend the active domains. We prove that if $g_5$ can be defined for a table and a constraint, then the $g_3$ value is always an upper bound of the $g_5$ value. However, the two measures are independent of each other in the sense that for any rational number $0\le\frac{p}{q}<1$ there are tables of an arbitrarily large number of rows and a constant number of columns that satisfy $g_3-g_5=\frac{p}{q}$. A possible world is obtained usually by adding many new values not occurring in the table before. The measure $g_5$ measures the smallest possible distortion of the active domains.
  We study complexity of determining these approximate measures.
}
\maketitle
\section{Introduction}
The information in many industrial and research databases may usually be incomplete due to many reasons. 
For example, databases related to instrument maintenance, medical applications, and surveys \cite{farhangfar2007novel}.
This makes it necessary to handle the cases when some information missing from a database and are required by the user. Imputation (filling in) is one of the common ways to handle the missing values \cite{lipski1981databases}. 

A new approach for imputing values in place of the missing information was introduced in \cite{alattar2019keys}, to achieve complete data tables, using only information already contained in the SQL table attributes (which are called the active domain of an attribute). Any total table obtained in this way is called a \emph{strongly possible world}. We use only the data shown on the table to replace the missing information because in many cases there is no proper reason to consider any other attribute values than the ones that  already exist in the table.
Using this concept, new key and functional dependency constraints called strongly possible keys (spKeys) and strongly possible functional dependencies (spFDs) were defined in \cite{alattar2020strongly,alattar2020functional}   that are satisfied after replacing any missing value (\texttt{NULL}) with a value that is already shown in the corresponding attribute. 
In Section~\ref{defs}, we provide the formal definitions of spKeys and spFDs.

The present paper continues the work started in  \cite{alattar2020strongly}, where an approximation notion was introduced to calculate how close any given set of attributes can be considered as a key. Tuple removal may be necessary because the active domains do not contain enough values to be able to replace the \texttt{NULL}\ values so that the tuples are pairwise distinct on a candidate key set of attributes $K$. In the present paper, we study approximation measures of spKeys and spFDs by adding tuples. Adding a tuple with new unique values will add more values to the attributes' active domains, thus some unsatisfied constraints may get satisfied.

For example,  $Car\_Model$ and $Door No$ is designed to form a key in the Cars Types table shown in Table \ref{add_vs_rmv} but the table does not satisfy the spKey $\spk{Car\_Model, Door No}$.  Two tuples would need to be removed, but  adding a new tuple with distinct door number value to satisfy $\spk{Car\_Model, Door No}$ is better than removing two tuples. 
In addition to that, we know that the car model and door number determines the engine type, then the added tuple can also have a new value in the $Door No$ attribute so that the table satisfy $(Car\_Model, Door No) \spfd Engine\_Type$ rather than removing other two tuples.
\begin{table}[ht]\caption{Cars Types Incomplete Table}\label{add_vs_rmv}
    \centering
\begin{tabular}{ c c c }
		\hline
		\textbf{Car\_Model} & \textbf{Door No} & \textbf{Engine\_Type}  \\  \hline
		 BMW I3 & 4 doors & $\bot$ \\ 
		 BMW I3 & $\bot$ & electric \\
		 Ford explorer & $\bot$ & V8\\
		 Ford explorer & $\bot$ & V6\\
		 \hline
	\end{tabular}
\end{table}

\section{Definitions}\label{defs}
Let $ R = \{ A_{1},A_{2},\ldots A_{n}\} $ be a relation schema. The set of all the possible values for each attribute $ A_i \in R $ is called the domain of $A_i$ and denoted as $ D_{i}$ = $dom(A_{i})$ for $i$ = 1,2,$\ldots n$. Then, for $X\subseteq R$, then $D_X = \prod\limits_{\forall A_i \in K} D_i$.

An instance $T$ = ($t_{1}$,$t_{2}, \ldots t_{s}$) over $R$ is a list of tuples such that each tuple is a function $t :  R \rightarrow \bigcup_{A_i\in R} dom(A_i)$ and $t[A_i] \in dom(A_i)$ for all $A_i$ in $R$. By taking a list of tuples we use the \emph{bag semantics} that allows several occurrences of the same tuple. Usage of the bag semantics is justified by that SQL allows multiple occurrences of tuples. Of course, the order of the tuples in an instance is irrelevant, so mathematically speaking we consider a \emph{multiset of tuples} as an instance.
For a tuple $t_{r} \in T$ and $X \subset R$, let $t_{r}[X]$ be the restriction of $t_{r}$  to $X$.

It is assumed that $\bot$ is an element of each attribute's domain that denotes missing information.
$t_{r}$ is called $V$-total for a set $V$ of attributes  if $\forall A\in V$, $t_r[A]\neq\bot$.  Also, $t_{r}$ is a total tuple if it is $R$-total.   $t_{1}$ and $t_{2}$ are \emph{weakly similar} on $X \subseteq R$ denoted as $t_{1}[X] \sim_{w} t_{2}[X]$ defined by K\"ohler et.al. \cite{kohler2016possible} if 

\[ \forall A \in X \quad (t_{1}[A] = t_{2}[A] \textrm{ or } t_{1}[A] =\bot \textrm{ or } t_{2}[A] = \bot). \]

Furthermore, $t_{1}$ and $t_{2}$ are \emph{strongly similar} on $X \subseteq R$ denoted by $t_{1}[X] \sim_{s} t_{2}[X]$ if

$$  \forall A \in X \quad  (t_{1}[A] = t_{2}[A] \neq \bot).$$
For the sake of convenience we write $t_{1} \sim_{w} t_{2}$ if $t_{1}$ and $t_{2}$ are weakly similar on $R$ and use the same convenience for strong similarity.
Let $T= (t_{1},t_{2}, \ldots t_{s})$ be a table instance over $R$. Then, $T'= (t'_{1},t'_{2}, \ldots t'_{s})$ is a \emph{possible world} of $T$, if $t_{i} \sim_{w} t'_{i}$ for all $i=1,2,\ldots s$ and $T'$ is completely \texttt{NULL} -free. That is, we replace the occurrences of $\bot$ with a value from the domain $ D_{i} $ different from $\bot$ for all tuples and all attributes. Active domain of an attribute is the set of all the distinct values shown under the attribute except the \texttt{NULL}. Note that this was called the \emph{visible domain} of the attribute in papers \cite{alattar2019keys,alattar2020functional,alattar2020strongly,al2022strongly}. 

\begin{definition}\label{vd-def} The \emph{active domain} of an attribute $ A_i $ ($ VD^T_{i} $) is the set of all distinct values except $\bot$ that are already used by tuples in $ T $:
$$ VD^T_{i} = \{t[A_{i}] : t \in T\} \setminus \{\bot \} \textrm{ for } A_i \in R.$$ 
\end{definition}To simplify notation, we omit the upper index $T$ if it is clear from the context what instance is considered.

While a possible world is obtained by using the domain values instead of the occurrence of \texttt{NULL}, a strongly possible world is obtained by using the active domain values. 

\begin{definition} A possible world $T^\prime$ of $T$ is called a \emph{strongly possible world (spWorld)} if  $t'[A_i]\in VD^T_i$ for all $t'\in T'$ and $A_i\in R$. 
\end{definition}
The concept of \emph{strongly possible world} was introduced in \cite{alattar2019keys}. Strongly possible worlds allow us to define \emph{strongly possible keys (spKeys)} and \emph{strongly possible functional dependencies (spFDs)}.
\begin{definition}\label{spfd-spkey_def} A strongly possible functional dependency, in notation $X\rightarrow_{sp} Y$, holds in table $T$ over schema $R$ if there exists a strongly possible world $T'$ of $T$ such that $T'\models X\rightarrow Y$. That is, for any $t'_1,t'_2\in T'$ $t'_1[X]=t'_2[X]$ implies $t'_1[Y]=t'_2[Y]$.
  The set of attributes $X$ is a strongly possible key, if there exists a strongly possible world $T'$ of $T$ such that $X$ is a key in $T'$, in notation $\spk{X}$. That is, for any $t'_1,t'_2\in T'$ $t'_1[X]=t'_2[X]$ implies $t_1'=t_2'$. 
\end{definition}
If $T=\{t_1,t_2,\ldots ,t_p\}$ and  $T'=\{t'_1,t'_2,\ldots ,t'_p\}$ is an spWorld of it with $t_i\sim_w t_i'$, then $t_i'$ is called an \emph{sp-extension} or in short an \emph{extension} of $t_i$. Let $X\subseteq R$ be a set of attributes and let $t_i\sim_w t_i'$ such that for each $A\in R\colon t_i'[A]\in VD(A)$, then $t_i'[X]$ is an \emph{strongly possible extension of $t_i$ on $X$ (sp-extension)}
\section{Related Work}\label{rltd-wrk}
Kivinen et. al. \cite{kivinen1995approximate} introduced the measure $g_3$ for total tables.  
Giannella et al. \cite{giannella2004approximation} measure the approximate degree of functional dependencies. They developed the IFD approximation measure and compared it with the other two measures: $g_3$ (minimum number of tuples need to be removed so that the dependency holds) and $\tau$ (the probability of a correct guess of an FD satisfaction) introduced in \cite{kivinen1995approximate} and \cite{goodman1979measures} respectively. They developed analytical bounds on the measure differences and compared these measures analysis on five datasets. The authors show that when measures are meant to define the knowledge degree of $X$ determines $Y$ (prediction or classification), then $IFD$ and $\tau$ measures are more appropriate than $g_3$. On the other hand, when measures are meant to define the number of "violating" tuples in an FD, then, $g_3$ measure is more appropriate than $IFD$ and $\tau$.

In \cite{wijsen2019foundations}, Jef Wijsen summarizes and discusses some theoretical developments and concepts in Consistent query answering CQA (when a user queries a database that is inconsistent with respect to a set of constraints). Database repairing was modeled by an acyclic binary relation $\leq_{db}$ on the set of consistent database instances, where $r_1$ $\leq_{db}$ $r_2$ means that $r_1$ is at least as close to $db$ as $r_2$. One possible distance is the number of tuples to be added and/or removed. In addition to that, Bertossi studied the main concepts of database repairs and CQA in \cite{bertossi2019database}, and emphasis on tracing back the origin, motivation, and early developments.
J. Biskup and L. Wiese present and analyze an algorithm called preCQE that is able to correctly compute a solution instance, for a given original database instance, that obeys the formal properties of inference-proofness and distortion minimality of a set of appropriately formed constraints in \cite{biskup2011sound}. 

\section{Approximation of strongly possible integrity constraints}
\begin{definition}\label{aspkey-def}
Attribute set $K$ is an approximate strongly possible key of ratio $a$ in table $T$, in notation $asp_a^- \left\langle K \right\rangle$, if there exists a subset  $S$ of the tuples $T$ such that $T\setminus S$ satisfies $sp \left\langle K \right\rangle$, and $|S|/|T|\le a$. The minimum $a$ such that $asp_a^- \left\langle K \right\rangle$ holds is denoted by $g_3(K)$. 
\end{definition}
The measure $g_3(K)$ has a value between $0$ and $1$, and it is exactly $0$ when $ sp \left\langle K \right\rangle $ holds in $T$, which means we don't need to remove any tuples. For this, we used the $g_3$ measure introduced in \cite{kivinen1995approximate}, to determine the degree to which $ASP$ key is approximate. For example, the $g_3$ measure of $\spk{X}$ on Table \ref{fig:spkey_main} is 0.5, as we are required to remove two out of four tuples to satisfy the key constraint as shown in Table \ref{fig:spkey_rmv}. 

The $g_3$ approximation measure for spKeys was introduced in \cite{alattar2020strongly}. In this section, we introduce a new approximation measure for spKeys. \begin{definition}
Attribute set $K$ is an add-approximate strongly possible key of ratio $b$ in table $T$, in notation $asp_b^+\left\langle K \right\rangle$, if there exists a set of tuples $S$ such that the table $TS$ satisfies $sp \left\langle K \right\rangle$, and $|S|/|T|\le b$. The minimum $b$ such that $asp_b^+\left\langle K \right\rangle$ holds is denoted by $g_5(K)$. 
\end{definition}
The measure $g_5(K)$ represents the approximation which is the ratio of the number of tuples needed to be added over the total number of tuples so that $sp \left\langle K \right\rangle$ holds. 
The value of the measure $g_3(K)$ ranges between $0$ and $1$, and it is exactly $0$ when $ sp \left\langle K \right\rangle $ holds in $T$, which means we do not have to add any tuple.  For example, the $g_5$ measure of $\spk{X}$ on Table \ref{fig:spkey_main} is 0.25, as it is enough to add one tuple to satisfy the key constraint as shown in Table \ref{fig:spkey_add}.
\begin{definition}\label{spfd-approx-rmv}
For the attribute sets $X$ and $Y$, $\sigma:X \rightarrow_{sp} Y$ is a remove-approximate strongly possible functional dependency of ratio $a$ in a table $T$, in notation \\ $T \models \approx_a^- X \rightarrow_{sp} Y$, if there exists a set of tuples $S$ such that the table $T\setminus S \models X \rightarrow_{sp} Y$, and $|S|/|T|\le a$. Then, $g_3(\sigma)$ is the smallest $a$ such that $T \models \approx_a^-\sigma$ holds. 
\end{definition}
\begin{definition}\label{spfd-approx-add}
For the attribute sets $X$ and $Y$, $\sigma:X \rightarrow_{sp} Y$ is an add-approximate strongly possible functional dependency of ratio $b$ in a table $T$, in notation $T \models \approx_b^+X \rightarrow_{sp} Y$, if there exists a set of tuples $S$ such that the table $T\cup S \models X \rightarrow_{sp} Y$, and $|S|/|T|\le b$. Then, $g_5(\sigma)$ is the smallest $b$  such that $T \models \approx_b^+\sigma$ holds. 
\end{definition}

Let $T$ be a table and $U \subseteq T$ be the set of the tuples that we need to remove so that the spKey holds in $T$, i.e, we need to remove $|U|$ tuples, while by adding a tuple with new values, we may make more than one of the tuples in $U$ satisfy the spKey using the new added values for their \nul s. In other words, we may need to add a fewer number of tuples than the number of tuples we need to remove to satisfy an spKey in the same given table. For example, Table \ref{fig:spkey_main} requires removing two tuples to satisfy $sp \left\langle X \right\rangle$, while adding one tuple is enough.

\begin{table}[h]\caption{Incomplete Table to measure $\spk{X}$} \label{fig:spkey_main}
  \begin{center}
\begin{tabular}{ll}
        			\hline
\multicolumn{2}{c}{\textbf{X}} \\
$X_1$              & $X_2$          
\\\hline
$\bot$ & 1 \\
2 & $\bot$ \\
2 & $\bot$ \\
2 & 2
\\ \hline
\end{tabular}
  \end{center}
\end{table}
\begin{table}[h]\caption{The table after removing ($asp_a^-\left\langle X \right\rangle$)}\label{fig:spkey_rmv}
  \begin{center}
  \begin{tabular}{ll}
        			\hline
\multicolumn{2}{c}{\textbf{X}} \\
$X_1$              & $X_2$          
\\\hline
$\bot$ & 1 \\
2 & 2 \\ \hline
\end{tabular}
  \end{center}
  \end{table}
\begin{table}\caption{The table after adding ($asp_b^+\left\langle X \right\rangle$)}\label{fig:spkey_add}
\begin{center}
  \begin{tabular}{ll}
        			\hline
\multicolumn{2}{c}{\textbf{X}} \\
$X_1$              & $X_2$          
\\\hline
$\bot$ & 1 \\
2 & $\bot$ \\
2 & $\bot$ \\
2 & 2 \\
3 & 3 
\\ \hline
\end{tabular}
\end{center}        		
\end{table}                

\subsection{Relation between $g_3$ and $g_5$ measures}
Results together with their proofs of this subsection were reported in the conference volume \cite{AS-foiks2022}, so the proofs are not included here.
The following Proposition is used to prove Proposition~\ref{g3_l_g5}.
\begin{proposition}\label{prop:removenontotal}
Let $T$ be an instance over schema $R$ and let $K\subseteq R$. If the $K$-total part of the table $T$ satisfies the key $sp \left\langle K \right\rangle$, then there exists a minimum set of tuples $U$ to be removed that are all non-$K$-total so that $T\setminus U$ satisfies $sp \left\langle K \right\rangle.$
\end{proposition}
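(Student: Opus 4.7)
The approach is a simultaneous exchange argument. Let $U$ be any minimum set of tuples whose removal makes the remaining table satisfy $\spk{K}$, let $T'$ be a strongly possible world of $T\setminus U$ in which $K$ is a key, and write $T_K$ for the $K$-total part of $T$. The plan is to construct a removal set $U^*$ of the same cardinality as $U$ but with $U^*\cap T_K=\emptyset$, by replacing every $K$-total tuple in $U$ with a carefully matched non-$K$-total tuple from $T\setminus U$.

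First I would argue that each $t\in U\cap T_K$ is ``blocked'' in $T'$: there exists a unique $s_t\in T\setminus U$ whose extension satisfies $s_t'[K]=t[K]$. Otherwise one could reinsert $t$ into $T\setminus U$ and impute its remaining $\bot$ positions from the (only larger) active domain, giving a smaller valid removal set and contradicting minimality of $U$. Because the hypothesis that $T_K$ satisfies $\spk{K}$ forbids two $K$-total tuples of $T$ from sharing a $K$-value, $s_t$ must be non-$K$-total; moreover $s_t\sim_w s_t'$ together with $s_t'[K]=t[K]$ forces $s_t[A]\in\{\bot,t[A]\}$ for every $A\in K$. Distinct tuples in $U\cap T_K$ produce distinct blockers because $K$-projections in $T'$ are pairwise distinct, so $U^*:=(U\setminus T_K)\cup\{s_t:t\in U\cap T_K\}$ has $|U^*|=|U|$ and $U^*\cap T_K=\emptyset$.

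The main obstacle is to exhibit a strongly possible world of $T\setminus U^*$ witnessing $\spk{K}$. My candidate is obtained from $T'$ by deleting each $s_t'$ and inserting $t$ in its place, with $t$'s remaining $\bot$ positions imputed afresh. The multiset of $K$-projections is preserved, so $K$ remains a key. On attributes $A\in K$ no active-domain value is lost because $s_t[A]\in\{\bot,t[A]\}$ and $t\in T\setminus U^*$. The subtle case is an attribute $A\notin K$ on which some $s_t$ held a value that was unique to $s_t$ within $T\setminus U$; such a value may drop out of $VD_A^{T\setminus U^*}$, so any extension that used it must be re-imputed. Since $A\notin K$, re-imputation is harmless for the key condition, and it is always feasible because $T_K\subseteq T\setminus U^*$ and the hypothesis $T_K\models\spk{K}$ forces $VD_A^{T_K}\neq\emptyset$ for every $A$ whenever $T_K$ is non-empty (otherwise a $K$-total tuple with $\bot$ at $A$ could not be sp-extended, violating the hypothesis). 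The corner case $T_K=\emptyset$ is immediate since then $U\subseteq T\setminus T_K$ automatically.
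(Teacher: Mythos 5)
Your argument is correct. The three ingredients all check out: a $K$-total tuple $t\in U$ must be blocked by some $s_t\in T\setminus U$ with $s_t'[K]=t[K]$ (else reinserting $t$ contradicts minimality); that blocker is necessarily non-$K$-total, since two $K$-total tuples sharing a $K$-value would already violate the hypothesis on $T_K$; and the swap preserves the multiset of $K$-projections, hence the key. You also correctly isolate the one delicate point---that deleting $s_t$ may shrink an active domain---and dispose of it: no loss can occur on $A\in K$ because $s_t[A]\in\{\bot,t[A]\}$ and $t$ remains, while a loss on $A\notin K$ is repaired by re-imputation from $VD_A^{T\setminus U^*}\supseteq VD_A^{T_K}\neq\emptyset$. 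The paper itself defers this proof to the conference version, but its machinery in Theorem~\ref{thm:spk-gs} suggests the intended, shorter route: in the $K$-extension graph each $K$-total tuple has exactly one neighbour (its own $K$-projection), the hypothesis says these tuples form a matchable set, and a standard alternating-path exchange yields a maximum matching covering all of them, so the uncovered tuples form a minimum removal set that is entirely non-$K$-total. Your proof is that same exchange carried out at the level of spWorlds rather than matchings; what it buys is an explicit treatment of the imputability of the non-key attributes, which the matching formulation quietly assumes. One expository nit: the injectivity of $t\mapsto s_t$ needs, besides the distinctness of $K$-projections in $T'$, the remark that two distinct $K$-total tuples of $U$ cannot share a $K$-value (again by the hypothesis); this is implicit in your text but worth stating.
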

\begin{proposition}\label{g3_l_g5}
For any $K\subseteq R$ with $|K|\geq 2$, we have $g_3(K)\geq g_5(K)$.
\end{proposition}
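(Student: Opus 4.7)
The plan is to convert a minimum removal witness $U$ for $g_3(K)$ into an addition set $S$ of the same size for $g_5(K)$, which immediately gives $g_5(K)\le |S|/|T|=|U|/|T|=g_3(K)$. First, I would observe that $g_5(K)$ can be finite only when no two $K$-total tuples of $T$ agree on $K$, since such a conflict cannot be repaired by appending further tuples; hence the $K$-total part of $T$ already satisfies $\spk{K}$, and Proposition~\ref{prop:removenontotal} allows me to pick a minimum $U$ whose members are all non-$K$-total.

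Next, I would introduce fresh constants $c_1,\ldots,c_{|U|}$ for each attribute of $R$, all disjoint from $VD^T$, and define $S=\{s_1,\ldots,s_{|U|}\}$ by making each $s_i$ entirely fresh, with the $K$-coordinates arranged in a cyclic-shift pattern across two fixed attributes $A,B\in K$ (available because $|K|\ge 2$): $s_i[A]=c_i$ and $s_i[B]=c_{(i\bmod |U|)+1}$. The desired sp-world of $T\cup S$ is then assembled in three parts. Tuples of $T\setminus U$ inherit any sp-world in which $K$ is a key, which exists by the choice of $U$. Tuples of $S$ are total with pairwise-distinct $K$-values by construction. Each $t_i\in U$ is sp-extended by retaining its non-$\bot$ entries on $K$ and filling the $\bot$-entries from $\{c_1,\ldots,c_{|U|}\}$: if $t_i$ has some non-$\bot$ entry on $K$, that original value already separates $t_i'[K]$ from every $s_j[K]$ (fresh versus original); and if $t_i$ is fully $\bot$ on $K$, setting $t_i'[A]=t_i'[B]=c_i$ separates $t_i'[K]$ from every $s_j[K]$, since the cyclic shift forces $s_j[B]\ne c_j$.

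The main obstacle will be the pairwise-distinctness case analysis across the three parts: between extensions of $U$ and of $S$, among extensions of $U$, and between extensions of $U$ and those of $T\setminus U$. Each potential collision reduces either to an equality between a fresh value and an original value (impossible by freshness) or between two fresh constants of distinct indices (impossible by construction), so the verification should go through. Additional care is needed for a few degenerate configurations, such as small $|U|$ or fully-$K$-$\bot$ tuples in $U$ when some $K$-attribute of $T$ has empty active domain; these can be absorbed by substituting an original value from $VD^T(A)$ in place of the fresh constant whenever one is available.
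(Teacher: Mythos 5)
Your proposal follows precisely the route the paper signposts: use Proposition~\ref{prop:removenontotal} to obtain a minimum removal set $U$ consisting of non-$K$-total tuples (legitimate, since $g_5$ is only meaningful when the $K$-total part already satisfies $\spk{K}$), and then trade $U$ for an equally large set of entirely fresh added tuples, so that $g_5(K)\le |U|/|T|=g_3(K)$. The collision analysis is sound: every potential clash on $K$ reduces either to a fresh value meeting an original one or to two fresh constants with distinct indices, and the cyclic shift is exactly what is needed to separate a fully-$K$-$\bot$ tuple $t_i$ from ``its own'' added tuple $s_i$, which is the only nontrivial tie. This is essentially the argument of the cited conference version.

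One remark on the degenerate configurations you defer. Your fallback of substituting an original value handles every case in which some attribute of $K$ has a nonempty active domain; and if every attribute of $K$ has an empty active domain, then every tuple of $T$ lies in $U$, so $|U|=|T|$ and the cyclic construction already works once $|T|\ge 2$. The single residual configuration is a one-row table whose tuple is $\bot$ on all of $K$: there $g_3(K)=1$, yet one added tuple yields only one value per active domain and forces the old tuple to coincide with the new one, so two additions are needed and $g_5(K)=2$. In that corner the stated inequality itself fails and must be excluded by convention (this is presumably what the abstract's proviso ``if $g_5$ can be defined'' is meant to absorb); it is not a defect of your construction, but it is worth stating explicitly rather than leaving under ``additional care is needed.''
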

Apart form the previous inequality, the two measures are totally independent for spKeys.
\begin{theorem}\label{thm:spkbounded}
Let $0\le\frac{p}{q}<1$ be a rational number. Then there exist tables over schema $\{A_1,A_2\}$ with arbitrarily large number of rows, such that $g_3(\{A_1,A_2\})-g_5(\{A_1,A_2\})=\frac{p}{q}$.
\end{theorem}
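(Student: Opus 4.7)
For each rational $p/q\in[0,1)$ with $\gcd(p,q)=1$, the plan is to exhibit a one-parameter family $\{T_t\}_{t\ge t_0}$ of tables over $\{A_1,A_2\}$ with $|T_t|=qt$ and $g_3(\{A_1,A_2\})-g_5(\{A_1,A_2\})=p/q$, where $t_0=t_0(p,q)$ is a constant. Set $a=(q-p)t-1$ and $c=pt$, and let $T_t$ consist of a seed tuple $(1,1)$, $a$ copies of $(\bot,1)$, and $c$ copies of $(\bot,\bot)$, so $|T_t|=1+a+c=qt$; this grows without bound as $t\to\infty$.

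First I would compute $g_3$. Since $VD(A_1)=VD(A_2)=\{1\}$ in $T_t$, every sp-extension of every row equals $(1,1)$, so any spWorld collapses every surviving tuple to $(1,1)$ and at most one row can be kept. Removing the seed empties $VD(A_1)$ and no sp-extension exists at all, so the optimum is to retain exactly the seed and delete the other $qt-1$ rows, giving $g_3=(qt-1)/qt$.

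Next I would pin down $g_5$ by matching bounds. For the lower bound, in any spWorld of $T_t\cup S$ the seed together with the $a$ rows $(\bot,1)$ constitute $a+1$ rows whose sp-extensions have $A_2=1$; these $a+1$ tuples of the form $(x,1)$ must be pairwise distinct, forcing $|VD(A_1)|\ge a+1$ in $T_t\cup S$. Since each added tuple contributes at most one new $A_1$-value, $|VD(A_1)|\le 1+|S|$, and therefore $|S|\ge a$. For the upper bound, take $S=\{(u_i,v_i):i=1,\ldots,a\}$ with all $2a$ values fresh and pairwise distinct; then $|VD(A_1)|=|VD(A_2)|=a+1$ in $T_t\cup S$. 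Extend each $(\bot,1)$ row to a distinct tuple $(u_i,1)$, and place the $c$ rows $(\bot,\bot)$ into distinct cells of the $a^2$ free positions of $VD(A_1)\times VD(A_2)$ (those not occupied by the seed, the added tuples, or the $(\bot,1)$-extensions). This is feasible provided $c\le a^2$, i.e.\ $pt\le((q-p)t-1)^2$, which rearranges to $(q-p)^2t^2\ge(2q-p)t-1$ and holds whenever $t\ge t_0$ for a threshold $t_0=t_0(p,q)$ depending only on $p,q$. Hence $g_5=a/qt$ and
\[g_3-g_5=\frac{qt-1}{qt}-\frac{(q-p)t-1}{qt}=\frac{pt}{qt}=\frac{p}{q}.\]

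The central obstacle is the capacity check in the upper bound: for small $t$ the $c=pt$ rows $(\bot,\bot)$ might fail to fit into the $a^2$ free cells and would force $|S|>a$, breaking the tight equality $g_5=a/qt$. Once $t\ge t_0$ the two bounds on $|S|$ coincide, and letting $t$ range over integers $\ge t_0$ produces tables with $qt$ rows (arbitrarily large) all satisfying $g_3-g_5=p/q$ exactly. The edge case $p=0$ is trivial: $c=0$, and both $g_3$ and $g_5$ equal $(qt-1)/qt$, with difference $0$.
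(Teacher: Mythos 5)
Your construction is correct: the seed row pins $VD(A_1)=VD(A_2)=\{1\}$ so that $g_3=(qt-1)/qt$, the $a+1$ rows with $A_2=1$ force $|S|\ge a$ for any repairing addition, and the fresh-valued tuples together with the capacity count $(a+1)^2-(2a+1)=a^2\ge c$ (valid for $t\ge t_0(p,q)$) give $g_5=a/qt$, hence $g_3-g_5=p/q$ on tables of $qt$ rows. The paper itself omits the proof (it is deferred to \cite{AS-foiks2022}), and your explicit two-column family with one total seed, single-null rows forcing the active-domain lower bound, and all-null rows absorbed by the quadratically grown active domain is essentially the same style of construction indicated there, so no further comparison is needed.
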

Unfortunately, the analogue of Proposition~\ref{prop:removenontotal} is not true for spFDs, so the proof of the following theorem is quiet involved.
\begin{theorem} \label{g3_geq_g5}
Let $T$ be a table over schema $R$, $\sigma: X\rightarrow_{sp} Y$ for some $X,Y\subseteq R$. Then $g_3(\sigma) \geq g_5(\sigma)$.
\end{theorem}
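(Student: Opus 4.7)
The plan is to turn a minimum removal witness into an addition witness of no greater size. Choose $U\subseteq T$ minimum with $T\setminus U\models X\rightarrow_{sp}Y$, so $|U|/|T|=g_3(\sigma)$, and fix an sp-world $W_0$ of $T\setminus U$ that satisfies $X\to Y$. For each $u\in U$ introduce a total tuple $s_u$ whose values on every attribute are pairwise distinct fresh symbols, outside every $VD^T_i$, and set $S=\{s_u:u\in U\}$. Then $|S|=|U|$, and the addition of $S$ enlarges every active domain by $|U|$ new values, which is the extra room the construction will exploit. It is enough to build an sp-world $W^\ast$ of $T\cup S$ satisfying $X\to Y$, since then $g_5(\sigma)\le|S|/|T|=g_3(\sigma)$. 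In $W^\ast$ each $v\in T\setminus U$ keeps its $W_0$-extension and each $s_u$ extends to itself; the extensions of the tuples of $U$ are produced by the two cases below.

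If $u\in U$ is not $X$-total, fill each of its $X$-nulls with the corresponding fresh value of $s_u$. The $X$-extension of $u$ then contains a symbol appearing nowhere else in $T\cup S$, so it is unique and $u$ cannot trigger an FD violation regardless of how its $Y$-nulls are filled. Every non-$X$-total tuple of $U$ is therefore absorbed by its single private $s_u$. The delicate case is when $u\in U$ is $X$-total: now $u[X]$ is frozen and the fresh $X$-values of $s_u$ are useless on $u$ itself. This is where the argument becomes involved, precisely because the analogue of Proposition~\ref{prop:removenontotal} fails for spFDs, so $U$ cannot in general be chosen to avoid $X$-total tuples.

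Partition the $X$-total members of $U$ by $X$-value into classes $U^x=\{u\in U:u\text{ is $X$-total},\ u[X]=x\}$, and let $V^x\subseteq T\setminus U$ consist of the tuples whose $W_0$-extension on $X$ equals $x$; because $W_0\models X\to Y$, these tuples share a single $Y$-extension $y^x$. If every $u\in U^x$ satisfies $u[Y]\sim_w y^x$, set every such $u$'s $Y$-extension to $y^x$ and the class is finished at no additional cost. Otherwise, for each conflicting coordinate apply one of two rerouting moves: either re-extend an $X$-null of some $v\in V^x$ to a fresh value coming from an $s_u$, pulling $v$ out of the $x$-bucket and into a unique bucket of its own, or re-extend a $Y$-null fill of $V^x$ on that coordinate so that the whole $x$-bucket settles on a common $Y$-extension weakly similar to every $u\in U^x$ and to every $v$ still in the bucket. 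The main obstacle is to show that at least one of these moves is always available: the only way both fail is a pair of $X$-total tuples with equal $X$-value and conflicting $Y$-total coordinates, which would make $X\rightarrow_{sp}Y$ fail in every $T\cup S'$ whatsoever, contradicting the standing hypothesis that $g_5(\sigma)$ is defined. Because the rerouting inside $U^x$ touches only the $x$-bucket and the fresh symbols assigned to that class, reassignments for distinct $x$ do not interfere and the construction can be carried out class by class. The resulting $W^\ast$ satisfies $X\to Y$ with $|S|=|U|$, which yields $g_5(\sigma)\le g_3(\sigma)$.
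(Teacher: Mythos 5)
Your high-level strategy---converting a minimum removal set $U$ into an added set $S$ with $|S|=|U|$ consisting of tuples carrying new values---is the natural one, but the concrete construction breaks at its first step, and the break is not cosmetic. You claim that if $u\in U$ is not $X$-total, filling its $X$-\nul s with the fresh values of $s_u$ makes $u$'s $X$-extension contain ``a symbol appearing nowhere else in $T\cup S$,'' so that $u$ cannot participate in a violation. That symbol is not unique: by your own construction it occurs in $s_u$ itself, so whenever $u$ is \nul\ on all of $X$ its extension coincides with $s_u[X]$, and the two tuples share a bucket whose $Y$-values are $u[Y]$ (possibly total and old) versus the all-fresh $s_u[Y]$. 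Concretely, take $R=\{A,B\}$, $X=\{A\}$, $Y=\{B\}$, $T=\{(x,2),\,(x',3),\,(\bot,1)\}$. The unique minimum removal set is $U=\{(\bot,1)\}$, so $g_3(\sigma)=1/3$; your $S=\{(f,g)\}$ with $f,g$ fresh leaves every $A$-bucket ($x$, $x'$, $f$) occupied by a tuple whose frozen $B$-value differs from $1$, so $T\cup S\not\models A\spfd B$ under every imputation---and no number of all-fresh tuples ever helps here, whereas adding the single tuple $(f,1)$ does. The freshness of the $Y$-coordinates of the added tuples, on which you rely throughout, is exactly what must be abandoned: the $Y$-parts of the $s_u$'s have to be coordinated with the tuples they are meant to absorb, and proving that $|U|$ such coordinated tuples always suffice (when several members of $U$ with mutually conflicting frozen $Y$-values compete for the new buckets) is the real content of the theorem.

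The same defect propagates into your treatment of the $X$-total members of $U$. The rerouting move sends a tuple $v\in V^x$ with an $X$-\nul\ into ``a unique bucket of its own,'' but that bucket is shared with the $s_u$ supplying the fresh value, and with every other tuple rerouted to the same fresh value---of which there may be far more than $|U|$---so new violations arise there. Moreover, the claimed dichotomy (both moves can fail only in the presence of two $X$-total tuples agreeing on $X$ and conflicting on a total $Y$-coordinate) is asserted rather than proved, and the example above already defeats your construction even though all of its $X$-total tuples have pairwise distinct $X$-values. The paper's remark that the analogue of Proposition~\ref{prop:removenontotal} fails for spFDs is precisely a warning that this bookkeeping cannot be waved through; as written, your argument establishes the inequality only in the special case where every $u\in U$ retains a non-\nul\ $X$-attribute and no rerouting is required.
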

Theorem~\ref{thm:pq-spfd} can be proven by a construction similar to the proof of Theorem~\ref{thm:spkbounded}.
\begin{theorem}\label{thm:pq-spfd}
For any rational number $0\le\frac{p}{q}<1$ there exists tables with an arbitrarily large number of rows and bounded number of columns that satisfy $g_3(\sigma)-g_5(\sigma)=\frac{p}{q}$ for $\sigma\colon X\rightarrow_{sp} Y$.
\end{theorem}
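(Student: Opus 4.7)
The plan is to adapt the parameterized construction of Theorem~\ref{thm:spkbounded} from the spKey setting to the spFD setting. Fix a rational $p/q\in[0,1)$ and a small, constant-width schema $R=X\cup Y$. I would build a base gadget $T_0$ whose tuples fall into two groups: a ``rigid'' group of (near-)total tuples whose restricted $VD(X)$ is small enough that every sp-extension collapses many tuples into the same $X$-class with conflicting prescribed $Y$-values; and a ``flexible'' group of $X$-null (and possibly $Y$-null) tuples whose $Y$-values can be reconciled only if $VD(X)$ is enlarged by fresh symbols. The gadget is tuned so that any cheapest removal strategy must delete a fraction $p/q$ of $T_0$ to break all the rigid conflicts, whereas strictly fewer tuples carrying brand-new $X$-values suffice to satisfy $\sigma$, because enlarging $VD(X)$ lets many flexible tuples be re-routed into fresh $X$-classes that absorb the conflicts simultaneously. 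The arithmetic of ``one fresh $X$-value diverts several null-containing tuples into a new $X$-class'' is what produces the gap $g_3(\sigma)-g_5(\sigma)=p/q$, and the parameters of $T_0$ (how many rigid tuples, how many flexible tuples, and the size of the initial $VD(X)$) can be tuned to hit any prescribed numerator $p$ and denominator $q$.

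To obtain instances with arbitrarily many rows while keeping the column count bounded, I would then take $k$ pairwise disjoint copies of $T_0$ over pairwise disjoint sets of active-domain symbols in $X$ and in $Y$. Because a tuple added to one copy contributes no useful $X$-value or $Y$-value to another copy, and because $X$-extensions are forced to lie within $VD(X)$ of the respective block, both the optimal removal count and the optimal addition count scale linearly in $k$; hence the identity $g_3(\sigma)-g_5(\sigma)=p/q$ is preserved for every $k$, yielding tables of arbitrarily many rows with the required difference.

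The main obstacle, and the reason the proof is not a direct transcription of the spKey argument, is the absence of an spFD analogue of Proposition~\ref{prop:removenontotal}: a minimum removal set for an spFD need not consist of non-total tuples, so one cannot normalize the removal step at the outset. Consequently, the lower bound $g_3(\sigma)\ge p/q$ on the gadget must be proved by arguing directly that \emph{no} choice of removed tuples combined with \emph{any} re-imputation of the remaining nulls can satisfy $X\to_{sp} Y$ after deleting fewer than $p$ out of $q$ rows; and the lower bound $g_5(\sigma)\ge p/q - p/q =$ the claimed value requires showing that adding fewer tuples leaves at least one unresolvable $X$-class conflict in every spWorld. Both arguments amount to the same kind of case analysis that underlies Theorem~\ref{g3_geq_g5}, and constitute the bulk of the technical work; once they are in place, the scaling step is essentially automatic.
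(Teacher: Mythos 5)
Your core mechanism is the right one and matches the paper's intent: the paper proves this theorem by the same kind of parameterized gadget as Theorem~\ref{thm:spkbounded}, exploiting the fact that a single added tuple with fresh values enlarges the active domains of several $X$-attributes at once, so it opens up many new $X$-combinations into which null-containing tuples can be routed, while removal only eliminates conflicts one tuple at a time. However, your scaling step is genuinely broken. Active domains are computed over the \emph{whole} table, not per block: if you take $k$ copies of $T_0$ over pairwise disjoint symbol sets, a tuple in copy $1$ with $\bot$ on an $X$-attribute may be imputed with any value contributed by copies $2,\dots,k$. With, say, two $X$-attributes each acquiring $\Theta(k)$ active-domain values, the number of available $X$-combinations grows like $k^2$ while the number of tuples grows like $k$, so for large $k$ every conflict can be resolved by imputation alone and $g_3(\sigma)$ collapses to $0$ instead of staying at its intended value. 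Adding a block-identifier attribute to $X$ does not repair this, since it only prevents cross-block collisions, not cross-block borrowing of values. The correct way to get arbitrarily many rows is to keep a \emph{single} gadget and scale its internal parameters (numbers of tuples in each group and sizes of the initial active domains) in lockstep so that the ratios, and hence $g_3$ and $g_5$, are preserved exactly.

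Two further points need attention. First, you never exhibit concrete parameters, so the \emph{exact} equality $g_3(\sigma)-g_5(\sigma)=\frac{p}{q}$ (as opposed to an inequality) is not established; your stated target ``$g_5(\sigma)\ge \frac{p}{q}-\frac{p}{q}$'' is garbled and suggests the matching lower bound on $g_5$ has not been worked out. Second, your ``rigid'' group must be designed so that the conflicts it creates are still resolvable after adding tuples: if the gadget contains two tuples that are total on $X\cup Y$, agree on $X$ and differ on $Y$, then no addition of tuples can ever satisfy $X\rightarrow_{sp}Y$ and $g_5(\sigma)$ is undefined, which would void the statement. So the conflicting tuples must each retain at least one \nul\ in $X$ through which a fresh value can divert them. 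You correctly identify that the absence of an spFD analogue of Proposition~\ref{prop:removenontotal} forces a direct lower-bound argument for $g_3$, but that argument, together with the exact arithmetic, is precisely the content of the proof and is missing here.
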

\subsection{Complexity problems}
 \begin{definition}The \emph{SPKey problem} is the following. \newline
    \textsf{Input} Table $T$ over schema $R$ and $K\subseteq R$.\newline
    \textsf{Question} Is it true that $T\models \spk{K}$?\newline
    The \emph{SPKeySystem problem} is the following.\newline
     \textsf{Input} Table $T$ over schema $R$ and $\mathcal{K}\subseteq 2^R$.\newline
     \textsf{Question} Is it true that $T\models \spk{\mathcal{K}}$?\newline
     The \emph{SPFD problem} is the following. \newline
     \textsf{Input} Table $T$ over schema $R$ and $X,Y\subseteq R$.\newline
     \textsf{Question} Is it true that $T\models X\rightarrow_{sp} Y$?
 \end{definition}
 The following was shown in \cite{alattar2020functional}.
    \begin{theorem}\label{thm:spk-in-P} SPKey$\in$P, SPkeySystem and SPFD are NP-complete
   \end{theorem}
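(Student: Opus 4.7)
The plan is to treat the three claims separately. For SPKey$\in$P, I would reduce to bipartite matching. The natural bipartite graph $G$ has the tuples of $T$ on one side and candidate $K$-value vectors from $\prod_{A\in K}VD(A)$ on the other, with $(t,v)$ an edge iff $v$ is an sp-extension of $t$ on $K$; by Definition~\ref{spfd-spkey_def}, $T\models \spk{K}$ iff $G$ admits a matching saturating the tuple side. The main technical concern is that the right-hand side has up to $\prod_{A\in K}|VD(A)|$ vertices, which is exponential in $|K|$; I would address this by collapsing right-vertices sharing identical tuple-neighbourhoods into a single capacitated node, obtaining a $b$-matching instance on a polynomial-size graph solvable by a standard augmenting-path algorithm.

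Membership in NP for SPKeySystem and SPFD is immediate: an spWorld $T'$ is a polynomial-size certificate, and verifying keys or an FD in the \nul-free table $T'$ is routine.

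For NP-hardness I would reduce from 3-SAT (graph 3-colouring would be an equally natural alternative). For SPFD, given a 3-SAT instance with variables $x_1,\ldots,x_n$ and clauses $C_1,\ldots,C_m$, I would build a table whose active domains on the relevant attributes are small fixed sets encoding $\mathrm{True}$/$\mathrm{False}$, with variable gadgets in which \nul-filling under $X$ chooses a truth value and clause gadgets in which $X\spfd Y$ holds iff at least one literal per clause is satisfied. For SPKeySystem a similar gadget approach works: each key in the family plays the role of a local constraint, while the requirement that a single common spWorld witness them all forces global consistency among different occurrences of the same variable. The hard part will be engineering these gadgets so the correspondence is tight in both directions, i.e., every spWorld decodes to a satisfying assignment and every satisfying assignment lifts to an spWorld; the polynomiality of the reduction and the soundness of the encoding depend on keeping the active domains small enough that the only choices permitted in an spWorld are exactly the intended Boolean ones.
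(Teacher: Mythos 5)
You are being asked to prove a statement that the paper itself does not prove --- it is quoted from \cite{alattar2020functional} --- so the comparison is against the standard arguments and against the closely related proof of Theorem~\ref{thm:spk-gs} in this paper. Your NP-membership arguments (an spWorld of $T\setminus U$ or $T\cup S$, respectively of $T$ itself, as a polynomial-size certificate) are fine, and your matching characterization of $T\models\spk{K}$ is exactly the paper's $K$-extension graph. The genuine gap is in how you make that matching computation polynomial. Collapsing right-hand vertices with identical tuple-neighbourhoods into capacitated nodes does \emph{not} in general yield a polynomial-size graph, because the number of distinct neighbourhoods can itself be exponential in the input. For example, take $K=R=\{A_1,\ldots,A_d\}$, two total tuples $(0,\ldots,0)$ and $(1,\ldots,1)$ so that $VD(A_i)=\{0,1\}$ for all $i$, and for each $i$ a tuple $u_i$ with $u_i[A_i]=0$ and $u_i[A_j]=\bot$ for $j\neq i$. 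Then for a right-hand vertex $v\in\{0,1\}^d$ we have $u_i\sim_w v$ iff $v[A_i]=0$, so the neighbourhoods of the $2^d$ right-hand vertices are pairwise distinct and every capacitated class is a singleton; the collapse buys nothing. The device that actually works is the one in the proof of Theorem~\ref{thm:spk-gs}: compute $\ell(t_i)=\prod_{A\colon t_i[A]=\bot}|VD^T(A)|$, sort so that $\ell(t_1)\le\cdots\le\ell(t_m)$, and observe that any tuple with $\ell(t_i)\ge i$ can be matched greedily after all the others, so only the tuples with few extensions --- and their at most $O(m^2)$ explicitly generated extensions --- need to enter the augmenting-path computation. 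You need this (or an equivalent Hall-type deficiency argument) for SPKey$\in$P.

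The hardness half of your proposal is a plan rather than a proof. The sentence ``the hard part will be engineering these gadgets so the correspondence is tight in both directions'' concedes that the entire content of the NP-hardness claims is missing: no gadget is specified, so nothing can be checked for soundness or completeness, and the polynomiality of the reduction is asserted rather than established. The direction is reasonable --- reductions from an NP-complete problem in which \nul-filling from a deliberately small active domain encodes a Boolean or colour choice is indeed how such results are obtained, and your observation that a \emph{system} of spKeys forces a single common spWorld (hence global consistency) correctly identifies why SPKeySystem is harder than a single SPKey --- but a referee could not accept this part without the explicit constructions and both directions of the equivalence.
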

However, the approximation measures raise new,  interesting algorithmic questions.
 \begin{definition} The \emph{SpKey-g3} problem is the following. \newline
    \textsf{Input} Table $T$ over schema $R$, $K\subseteq R$ and $0\le q<1$.\newline
    \textsf{Question} Is it true that $g_3(K)\le q$ in table $T$?\newline
    The \emph{SpKey-g5} problem is the following. \newline
    \textsf{Input} Table $T$ over schema $R$, $K\subseteq R$ and $0\le q<1$.\newline
    \textsf{Question} Is it true that $g_5(K)\le q$ in table $T$?\newline
  \end{definition}
 \begin{proposition}\label{prop:spk-g5}
   The decision problem SpKey-g5 is in P.
 \end{proposition}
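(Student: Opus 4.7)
The plan is to reduce the decision problem SpKey-g5 to a polynomial number of SPKey instances. Given an input $(T, K, q)$, for each $k = 0, 1, \ldots, \lfloor q \cdot |T| \rfloor$, I would form the augmented table $T \cup S^*_k$, where $S^*_k$ consists of $k$ ``generic'' tuples whose entries in every attribute of $K$ are pairwise distinct fresh values that never appear in $T$. Each augmented table is then tested for $\spk{K}$ using the polynomial-time algorithm from Theorem~\ref{thm:spk-in-P}, and the procedure answers yes as soon as any $k$ succeeds, and no if none does.

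Correctness rests on a key lemma: $g_5(K) \le q$ iff $T \cup S^*_k \models \spk{K}$ for some $k \le \lfloor q \cdot |T| \rfloor$. The easy direction takes $S^*_k$ itself as witness. The substantive direction is the optimality of the generic construction: \emph{if any} set $S$ of size $k$ satisfies $T \cup S \models \spk{K}$, then the generic $S^*_k$ does as well. Combined with the obvious monotonicity (appending one more fresh tuple to a successful $S^*_k$ preserves $\spk{K}$), this ensures that the linear sweep over $k$ faithfully decides $g_5(K) \le q$.

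To establish the substantive direction I would fix an spWorld of $T \cup S$ with $K$-extensions $t'$ for $t \in T$ and $s'_j$ for $s_j \in S$. Let $V_i = VD^S(A_i) \setminus VD^T(A_i)$ be the genuinely new values contributed by $S$ in attribute $A_i \in K$, so $|V_i| \le |S| = k$, and write the fresh values of $S^*_k$ in column $A_i$ as $W_i = \{w_i^1, \ldots, w_i^k\}$, with $s^*_j[A_i] = w_i^j$. For each $A_i \in K$ I would define an injection $\phi_i$ on $VD^T(A_i) \cup V_i$ that is the identity on $VD^T(A_i)$ and sends each $v \in V_i$ to $w_i^{j_i(v)}$, where $j_i(v) := \min\{j : s'_j[A_i] = v\}$. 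Setting $\tilde t[A_i] := \phi_i(t'[A_i])$ yields valid sp-extensions in $T \cup S^*_k$ that agree with the non-null entries of $t$; distinctness on $K$ among the $\tilde t$'s is preserved by injectivity of the $\phi_i$. A putative collision $\tilde t[K] = s^*_j[K] = (w_1^j, \ldots, w_{|K|}^j)$ would force $t'[A_i] = s'_j[A_i]$ for every $A_i \in K$, i.e., $t'[K] = s'_j[K]$, contradicting distinctness of extensions in the spWorld of $T \cup S$.

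The main obstacle is precisely this last collision argument: the maps $\phi_i$ must be coupled to the specific spWorld of $T \cup S$ (via the indices $j_i(v)$) so that a full $K$-collision of a transformed $\tilde t$ with a generic $s^*_j$ back-propagates to a full collision $t'[K] = s'_j[K]$ in the original spWorld; defining each $\phi_i$ independently from the spWorld would only prevent collisions coordinate-wise. Once the key lemma is secured, one also notes that if two $K$-total tuples of $T$ already agree on $K$ then no addition can separate them, which is an easy polynomial-time reject. The remaining $|T|+1$ polynomial-time SPKey queries then place SpKey-g5 in P.
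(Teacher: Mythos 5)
Your proposal is correct and takes essentially the same approach as the paper: replace an arbitrary witness set of added tuples by tuples consisting of fresh, pairwise distinct new values, and reduce the question to the polynomial-time SPKey test of Theorem~\ref{thm:spk-in-P}. The only differences are minor --- the paper performs a single check after adding $\lfloor q\cdot|T|\rfloor$ fresh tuples instead of sweeping over $k$, and it merely asserts the replacement step that your injection argument spells out in detail.
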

 \begin{proof} Let us assume that tuples $s_i\colon i=1,2,\ldots ,p$ over schema $R$ are such that $T\cup\{s_1,s_2,\ldots s_p\}$ is optimal, so $g_5(K)=\frac{p}{m}$. Then clearly we may replace $s_i$ by $s'_i=(z_i,z_i,\ldots ,z_i)$ for all $ i=1,2,\ldots ,p$ where $z_i$'s are pairwise distinct new values not appearing in the (extended) table $T\cup\{s_1,s_2,\ldots s_p\}$ so that $T\cup\{s'_1,s'_2,\ldots s'_p\}\models \spk{K}$. Thus, if $g_5(K)\le q$ is needed to be checked for a table $T$ of $m$ tuples, one may add $\lfloor q\cdot m\rfloor$ completely new tuples to obtain table $T'$ and check whether $T'\models \spk{K}$ in polynomial time by Theorem~\ref{thm:spk-in-P}.
 \end{proof}
\begin{theorem}\label{thm:spk-gs} Decision problem SpKey-g3 is in P.\end{theorem}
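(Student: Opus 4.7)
The plan is to design a polynomial-time algorithm that computes $g_3(K)$ exactly, so the decision $g_3(K)\le q$ reduces to a numerical comparison. The algorithm has a preprocessing phase that identifies forced removals among $K$-total tuples, followed by a (constrained) bipartite matching phase that handles the non-$K$-total part.

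First, I would handle conflicts among $K$-total tuples. Partition the $K$-total part $T_0$ of $T$ into equivalence classes under $t[K]=t'[K]$; since no strongly possible world can separate two $K$-total tuples that already agree on $K$, at least $\ell-1$ tuples must be removed from each class of size $\ell$. Let $c_0$ be the total of these forced removals, and let $T^\star$ be the resulting sub-table, in which the $K$-total part satisfies $\spk{K}$. Proposition~\ref{prop:removenontotal} then implies that any further optimal removal can be confined to the non-$K$-total part $T_1\subseteq T^\star$, so the remaining task is to compute $u^\star = \min\{|U|: U\subseteq T_1,\ T^\star\setminus U \models \spk{K}\}$, after which $g_3(K) = (c_0 + u^\star)/|T|$.

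Next, I would encode this remaining task as a bipartite matching. Build the graph $G$ with left vertices $T^\star$, right vertices $W = \prod_{A\in K} VD^{T^\star}(A)$, and edges $(t,w)$ whenever $w$ is an sp-extension of $t$. A perfect matching of $T^\star\setminus U$ into $W$ yields a strongly possible world in which the kept tuples are pairwise distinct on $K$, \emph{provided} each chosen value $w[A]$ still lies in $VD^{T^\star\setminus U}(A)$.

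The main obstacle is precisely this ``active-domain preservation'' subtlety, which makes the naive deficiency of $G$ only a lower bound for $u^\star$: removing a non-$K$-total tuple $t$ can drop the value $t[A]\ne\bot$ from $VD(A)$, invalidating extensions used elsewhere. For example, on $T=\{(1,\bot),(\bot,2)\}$ the unconstrained bipartite deficiency is $1$ while the true $g_3$ equals $1$, requiring removal of both tuples. To overcome this I would augment the matching with support constraints: for each matched pair $(t,w)$ and each $A\in K$ with $t[A]=\bot$, require some other kept tuple $s\in T^\star\setminus U$ to satisfy $s[A]=w[A]$. The resulting constrained minimization admits a polynomial-size encoding, most naturally as a min-cost flow or matroid-intersection instance with intermediate ``attribute-value provider'' nodes enforcing support. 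Solving this instance in polynomial time produces $u^\star$ and hence $g_3(K)$; this combinatorial encoding step is the delicate technical heart of the argument.
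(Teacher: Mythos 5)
Your proposal has two genuine gaps, and the second sits exactly where you yourself locate the ``delicate technical heart.'' First, the bipartite graph you build has right-hand side $W=\prod_{A\in K}VD(A)$; even restricted to actual sp-extensions of tuples of $T$, a single tuple with several $\bot$'s inside $K$ has $\prod_{A\colon t[A]=\bot}|VD(A)|$ extensions, which is exponential in the input size. You never say how a matching or flow computation on this object is carried out in polynomial time, and this is precisely the difficulty the paper's proof is devoted to: it takes from \cite{alattar2020strongly} the closed formula $g_3(K)=\bigl(|T|-\nu(G)\bigr)/|T|$ for the full extension graph $G$, and then spends the entire argument showing that $\nu(G)$ is computable in polynomial time --- sort the tuples by their number $\ell(t_i)$ of sp-extensions, explicitly construct only the subgraph induced by the tuples with $\ell(t_i)<i$ (whose extension sets have total size $O(|T|^2)$), and match the remaining tuples greedily, since a tuple with at least $i$ potential partners can always be matched after the first $i-1$ tuples have been dealt with. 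Without some such truncation argument your algorithm is simply not polynomial.

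Second, the ``support constraints'' you introduce to repair the active-domain issue are asserted, not constructed. A condition of the form ``for each matched pair $(t,w)$ and each $A$ with $t[A]=\bot$, some other kept tuple must carry the value $w[A]$'' couples the matching choice to the removal choice in a way that is not a min-cost flow or a matroid intersection in any evident encoding; you would need to exhibit the reduction and prove it correct, and it is not at all clear that a polynomial one exists. The paper never faces this problem because it invokes the equality $g_3(K)=(|T|-\nu(G))/|T|$ proved in \cite{alattar2020strongly}, where $G$ is built over the active domains of the \emph{original} table, so no constrained matching arises. Your observation that deleting tuples can shrink active domains (your two-row example) is a legitimate point to raise against that cited formula and worth checking against its precise statement, but within this proof the productive move is to use or re-prove that formula, not to replace it with a constrained-matching machinery whose polynomial solvability is unsupported. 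Your preprocessing of $K$-total duplicates and the appeal to Proposition~\ref{prop:removenontotal} are fine as far as they go, but they do not supply either missing piece.
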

 \begin{proof}
 Consider the relational schema $R$ and $K\subseteq R$. Furthermore, let $T$ be an instance over $R$ with \nul s. 
Let $T'$ be the set of total tuples  $T'=\{t'\in\Pi_{A\in K}VD^T(A)\colon \exists t\in T \text{ such that } t[K]\sim_w t'[K]\}$, furthermore let $G=(T,T';E)$ be the bipartite graph, called the \emph{$K$-extension graph of $T$}, defined by $\{t,t'\}\in E\iff t[K]\sim_w t'[K]$.
Finding a matching of $G$ that covers all the tuples in $T$ (if exists) provides the set of tuples in $T^{\prime}$ to replace  the incomplete tuples in $T$ with, to verify that $K$ is an spKey.

It was shown in \cite{alattar2020strongly} that the $g_3$ approximation measure for strongly possible keys satisfies
$$ g_3(K) = \frac{|T| -  \nu(G)}{|T|}. $$
where $\nu(G)$ denotes the maximum matching size in the $K$-extension graph $G$.
However, the size of $G$ is usually exponential function of the size of the input of the decision problem SpKey-g3, as $T'$ is usually exponentially large.

In order to make our algorithm run in polynomial time we only generate part of $T'$. Let $T=\lbrace t_1 , t_2 \ldots t_m \rbrace$ and  $\ell(t_i)=|\{t^\star\in \Pi_{A\in K}VD^T(A)\colon t^\star \sim_w t_i[K]\}|$. Note that $\ell(t_i)=\prod_{A\colon t_i[A]=\bot}|VD^T(A)|$, hence these values can be calculated by scanning $T$ once and using appropriate search tree data structures to hold values of active domains of each attribute. Sort tuples of $T$ in non-decreasing $\ell(t_i)$ order, i.e. assume that $\ell(t_1)\le \ell(t_2)\le\ldots \le \ell(t_m)$. Let $j=\max\{i\colon \ell(t_i)<i\}$ and $T_j=\{ t_1,t_2,\ldots t_j\}$, furthermore $T_j^\star= \lbrace t^\star : \exists t \in T_j : t^\star \sim_w t[K] \rbrace  \subseteq \Pi_{A\in K}VD^T(A) $. Note that $|T_j^\star|\le \frac{1}{2}j(j-1) $. If $\forall i=1,2,\ldots ,m\colon \ell(t_i)\ge i$, then define $j=0$ and $T_j^\star=\emptyset$. Let $G^{\star}=(T_j,T_j^\star;E^\star)$ be the induced subgraph of $G$ on the vertex set $T_j\cup T_j^\star$. Note that $|T_j^\star|\le \frac{1}{2}j(j-1) $.\newline
\textit{Claim} $\nu(G)=\nu(G^{\star})+|T\setminus T_j|$. \newline
\textit{Proof of Claim:} 
The inequality $\nu(G)\le\nu(G^{\star})+|T\setminus T_j|$ is straightforward.  On the other hand, a matching of size $\nu(G^{\star})$ in $G^{\star}$ can greedily be extended to the vertices in $|T\setminus T_j|$, as $t_i\in T\setminus T_j$ has at least $i$ neighbours (which can be generated in polynomial time).

Thus it is enough to determine $\nu(G^{\star})$ in order to calculate $g_3(K)$, and that can be done in polynomial time using Augmenting Path method \cite{laszlo}.
 \end{proof}
 Note that the proof above shows that the \emph{exact value} of $g_3(K)$ can be determined in polynomial time. This gives the following corollary.
 \begin{definition}
     The decision problem \emph{SpKey-g3-equal-g5} is defined as
     \textsf{Input} Table $T$ over schema $R$, $K\subseteq R$.\newline
     \textsf{Question} Is $g_3(K)=g_5(K)$?
 \end{definition}
 \begin{corollary}
   The decision problem SpKey-g3-equal-g5 is in P.
 \end{corollary}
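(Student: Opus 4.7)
The plan is simply to compute the exact values of $g_3(K)$ and $g_5(K)$ in polynomial time and then compare them. The remark immediately preceding the corollary already notes that the matching-based algorithm in the proof of Theorem~\ref{thm:spk-gs} returns $g_3(K)=(|T|-\nu(G^\star))/|T|$ exactly, not merely the decision against an external threshold, so nothing more is needed on the $g_3$ side.

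For $g_5(K)$ I would exploit the structural observation used in the proof of Proposition~\ref{prop:spk-g5}: in any optimal add-set $S$ the tuples of $S$ may be replaced by tuples $(z_i,z_i,\ldots,z_i)$ with the $z_i$ pairwise distinct fresh values, without losing feasibility. Letting $T^{(p)}$ denote the enlargement of $T$ by $p$ such completely fresh tuples, the satisfaction of $\spk{K}$ in $T^{(p)}$ is monotone in $p$, since appending a further all-new tuple to a witnessing spWorld of $T^{(p)}$ yields a witnessing spWorld of $T^{(p+1)}$. Therefore $g_5(K)\cdot |T|$ equals the least $p\in\{0,1,\ldots,|T|\}$ for which $T^{(p)}\models \spk{K}$. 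I would find this $p$ by a linear scan (or binary search, using monotonicity), invoking the polynomial-time SPKey procedure of Theorem~\ref{thm:spk-in-P} at each step; this produces the exact value of $g_5(K)$ in polynomial time. The degenerate case in which $g_5(K)$ is undefined — which happens exactly when the $K$-total part of $T$ already violates the spKey, and hence no number of additions can help — is detected directly in polynomial time by checking for two $K$-total tuples strongly similar on $K$; in that case the algorithm answers NO since $g_3(K)$ is always defined.

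With both exact rationals in hand, the decision procedure returns YES iff $g_3(K)=g_5(K)$. I do not expect any serious obstacle: the corollary is really a packaging of the two polynomial-time computations, and the only subtlety is verifying that the procedures from Theorem~\ref{thm:spk-gs} and Proposition~\ref{prop:spk-g5} can be upgraded from threshold tests to exact-value computations. For Theorem~\ref{thm:spk-gs} this is immediate from the formula $g_3(K)=(|T|-\nu(G^\star))/|T|$; for Proposition~\ref{prop:spk-g5} it follows from the monotonicity of the all-fresh-tuple construction described above, which is the mildly nonobvious ingredient of the argument.
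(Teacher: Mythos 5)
Your proof is correct and follows the same route the paper intends: the paper justifies the corollary only by remarking that the exact value of $g_3(K)$ falls out of the matching argument of Theorem~\ref{thm:spk-gs}, leaving implicit that the exact value of $g_5(K)$ is likewise computable via Proposition~\ref{prop:spk-g5}. You supply precisely that missing half (monotonicity of adding all-fresh tuples, a scan over $p\in\{0,\dots,|T|\}$ using the polynomial SPKey test, and the degenerate case where $g_5$ is undefined), which is a faithful and slightly more careful rendering of the paper's argument.
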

 \paragraph{Example}
  Let $R=\{A_1,A_2,A_3\}$, $K_1=\{A_1,A_2\}, K_2=\{A_2,A_3\}$.
  \[T=\begin{array}{lccc}
        &A_1&A_2&A_3\\ \cline{2-4}
        t_1 &1&\bot&1\\
        t_2&1&2&2\\
        t_3&2&1&1\\
        t_4&2&1&1\\\cline{2-4}\end{array}\]
  $T\setminus\{t_4\}\models \spk{K_1}$ and   $T\setminus\{t_4\}\models \spk{K_2}$, but the spWorlds are different. In particular, this implies that for $\mathcal{K}=\{K_1,K_2\}$ we have $g_3(\mathcal{K})>\max\{g_3(K)\colon K\in\mathcal{K}\}$ On the other hand, trivially $g_3(\mathcal{K})\ge\max\{g_3(K)\colon K\in\mathcal{K}\}$ holds. This motivates the following definition.
  \begin{definition}\label{def:max-g3}  
    The problem \emph{Max-g3} defined as \newline
    \textsf{Input} Table $T$ over schema $R$, $\mathcal{K}\subseteq 2^R$.\newline
    \textsf{Question} Is $g_3(\mathcal{K})=\max\{g_3(K)\colon K\in\mathcal{K}\}$?
  \end{definition}
  \begin{theorem} Let Table $T$ over schema $R$ and $\mathcal{K}\subseteq 2^R$. 
 The decision problem Max-g3 is NP-complete.
    \end{theorem}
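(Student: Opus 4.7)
The plan is to prove NP-completeness in the standard two steps: first establish membership in NP, then reduce a problem already known to be NP-complete to Max-g3. The natural candidate for the reduction is SPKeySystem, which is NP-complete by Theorem~\ref{thm:spk-in-P}, because $g_3(\mathcal{K})=0$ is by definition equivalent to $T\models \spk{\mathcal{K}}$ and so SPKeySystem sits naturally inside the boundary case of Max-g3.

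For NP membership, the key observation is that Theorem~\ref{thm:spk-gs} computes the \emph{exact value} of $g_3(K)$ for a single $K$ in polynomial time, so $M:=\max\{g_3(K)\colon K\in\mathcal{K}\}$ can be computed deterministically in polynomial time. A witness for a yes-instance then consists of a pair $(S,T^{\dagger})$, where $S\subseteq T$ satisfies $|S|\le\lfloor M\cdot |T|\rfloor$ and $T^{\dagger}$ is a total table over $R$ obtained by imputing values from $VD^{T\setminus S}(A)$ into the \nul s of $T\setminus S$, such that $T^{\dagger}\models K$ for every $K\in\mathcal{K}$. Each of these conditions is polynomial-time checkable, so Max-g3$\in$NP.

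For NP-hardness, reduce SPKeySystem to Max-g3 as follows. Given an instance $(T,\mathcal{K})$ of SPKeySystem, first apply Theorem~\ref{thm:spk-in-P} (the SPKey part, which lies in P) to every $K\in\mathcal{K}$ individually; this is polynomial since $|\mathcal{K}|$ is polynomial in the input. If some $K$ fails, i.e.\ $T\not\models\spk{K}$, then certainly $T\not\models\spk{\mathcal{K}}$, and the reduction outputs a fixed no-instance of Max-g3, for example the four-tuple table displayed just before Definition~\ref{def:max-g3}, for which $g_3(\mathcal{K})>\max\{g_3(K_1),g_3(K_2)\}$. Otherwise every individual $g_3(K)=0$ so $M=0$; the Max-g3 question on $(T,\mathcal{K})$ then collapses to asking whether $g_3(\mathcal{K})=0$, which is exactly whether $T\models\spk{\mathcal{K}}$. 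In both branches the Max-g3 answer coincides with the SPKeySystem answer, and the whole construction runs in polynomial time, completing the reduction.

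The main obstacle I anticipate is not in the reduction itself but in making sure that the two definitions line up: one must be a bit careful in the NP witness with the fact that the active domain used for the spWorld of $T\setminus S$ is $VD^{T\setminus S}$ rather than $VD^{T}$, since removing tuples may shrink active domains; however this only makes the check more restrictive and poses no complexity issue. A secondary point is that the no-instance used in the reduction is not constructed on the fly but borrowed from the earlier example preceding Definition~\ref{def:max-g3}, so no new gadget is needed. The heavy lifting of the argument is really provided by Theorems~\ref{thm:spk-in-P} and~\ref{thm:spk-gs}, which together yield both the polynomial-time preprocessing filter and the polynomial bound $M$ used in the NP certificate.
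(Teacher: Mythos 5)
Your proposal is correct and follows essentially the same route as the paper: NP membership via a witness consisting of a removal set plus an spWorld of the reduced table, verified using the polynomial-time computability of $g_3(K)$ from Theorem~\ref{thm:spk-gs}, and NP-hardness by the same Karp reduction from SPKeySystem that first tests each $K\in\mathcal{K}$ individually and otherwise outputs the fixed no-instance preceding Definition~\ref{def:max-g3}. The only (harmless) difference is cosmetic: you compute $M=\max_K g_3(K)$ deterministically up front, whereas the paper puts an index $j$ with $|U|/|T|=g_3(K_j)$ into the witness and verifies it with two $g_3$ checks.
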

    \begin{proof}
      The problem is in NP, a witness consists of a set of tuples $U$ to be removed, an index $j\colon \frac{|U|}{|T|}=g_3(K_j)$, also an spWorld $T'$ of $T\setminus U$ such that each $K_i$ is a key in $T'$. Verifying the witness can be done in three steps.
      \begin{enumerate}
      \item $g_3(K_j)\not\le\frac{|U|-1}{|T|}$ is checked in polynomial time using Theorem~\ref{thm:spk-gs}.
      \item For all $i\not= j$ check that $g_3(K_i)\le  \frac{|U|}{|T|}$ using again Theorem~\ref{thm:spk-gs}.
      \item  Using standard database algorithms check that $\forall i\colon K_i$ is a key in $T'$.
      \end{enumerate}
      On the other hand, the SPKeySystem problem can be Karp-reduced to the present question as follows. First check for each $K_i\in\mathcal{K}$ separately whether $\spk{K_i}$ holds, this can be done in polynomial time. If $\forall i\colon T\models \spk{K_i}$ then give $\mathcal{K}$ and $T$ as input for Max-g3. It will answer Yes iff $T\models\spk{\mathcal{K}}$. However, if $\exists i\colon T\not\models \spk{K_i}$, then give the example above as input for Max-g3. Clearly both problems have No answer.
    \end{proof}
    According to Theorem~\ref{thm:spk-in-P}, it is NP-complete to decide whether a given SpFD holds in a table. Here we show that approximations are also hard.
    \begin{definition}
      The \emph{SPFD-g3 (SPFD-g5)} problems are defined as follows.\newline
      \textsf{Input} A table $T$ over schema $R$, $X,Y\subseteq R$, and positive rational number $q$. \newline
      \textsf{Question} Is $g_3(X\spfd Y)\le q$? ( $g_5(X\spfd Y)\le q$?)
    \end{definition}
    \begin{theorem}
      Both decision problems SPFD-g3 and SPFD-g5 are NP-complete.
    \end{theorem}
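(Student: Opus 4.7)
The plan has two parts. First, I place both problems in NP by exhibiting polynomial-size certificates that can be verified in polynomial time. For SPFD-g3 a yes-certificate is a pair $(S,T')$, where $S\subseteq T$ satisfies $|S|/|T|\le q$ and $T'$ is a strongly possible world of $T\setminus S$ with $T'\models X\rightarrow Y$; for SPFD-g5 it is a pair $(S,T')$ where $S$ is a multiset of tuples with $|S|/|T|\le q$ and $T'$ is an spWorld of $T\cup S$ with $T'\models X\rightarrow Y$. Since $q<1$, we have $|S|<|T|$, so the witness is of polynomial size in the input. Verification consists of three routine scans: check weak similarity of $T'$ with the altered table tuple by tuple, confirm that every value appearing in $T'$ lies in the active domain of the altered table, and finally test the ordinary FD $X\rightarrow Y$ on the total table $T'$.

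For NP-hardness I would Karp-reduce from the SPFD problem, which is NP-complete by Theorem~\ref{thm:spk-in-P}. Given an instance $(T,X,Y)$ of SPFD with $|T|=m$, output the instance $(T,X,Y,q)$ of SPFD-g3 with $q=\frac{1}{m+1}$, a positive rational. The inequality $|S|/m \le \frac{1}{m+1}$ forces $|S|=0$, so $g_3(X\spfd Y)\le q$ holds if and only if $T\models X\spfd Y$. Exactly the same argument, with $g_5$ in place of $g_3$, shows that no tuples can be added either, giving $g_5(X\spfd Y)\le q$ iff $T\models X\spfd Y$. Thus a single reduction handles both SPFD-g3 and SPFD-g5, and they inherit NP-hardness from SPFD.

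The only subtlety I expect is in the NP-membership step: verifying that the candidate $T'$ is genuinely an spWorld of the modified table requires recomputing the active domains after the addition (SPFD-g5) or removal (SPFD-g3), since adding tuples can enlarge $VD(A)$ while removing tuples may shrink it. Once the recomputed active domains are used for the membership tests, verification of the ordinary functional dependency $X\rightarrow Y$ on the total table $T'$ is standard, and the overall argument is short.
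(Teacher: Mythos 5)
Your proposal is correct and follows essentially the same route as the paper: NP-membership via the certificate consisting of the removed/added tuples together with an spWorld of the modified table, and NP-hardness by observing that for $q<1/m$ the approximate problems collapse to the exact SPFD problem, which is NP-complete by Theorem~\ref{thm:spk-in-P}. Your explicit choice $q=\frac{1}{m+1}$ and the remark about recomputing active domains of the modified table are just slightly more detailed renderings of the same argument.
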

    \begin{proof}
      To show that SPFD-g3$\in$NP one may take a witness consisting of a subset $U\subset T$, an spWorld $T^\star$ of $T\setminus U$ such that $T^\star\models X\rightarrow Y$ and $|U|/|T|\le q$. The validity of the witness can easily be checked in polynomial time. Similarly, to show that SPFD-g5$\in$NP one may take a set of tuples $S$ over $R$ and an spWorld $T^\star$ of$T\cup S$ such that $T^\star\models X\rightarrow Y$ and $|S|/|T|\le q$.

      On the other hand, if $|T|=m$ and $q<1/m$, then both SPFD-g3 and SPFD-g5 are equivalent with the original SPFD problem, since the smallest non-zero approximation measure is obtained if one tuple is needed to be deleted or added. According to Theorem~\ref{thm:spk-in-P}, SPFD problem is NP-complete, thus so are SPFD-g3 and SPFD-g5.
      \end{proof}
 \subsection*{Acknowledgement}
  Research of the second author was partially supported by the
    National Research, Development and Innovation Office (NKFIH)
    grants K--116769 and SNN-135643. This work was also   supported by the BME-
Artificial Intelligence FIKP grant of EMMI (BME FIKP-MI/SC) and by the Ministry of Innovation and
Technology and the National Research, Development and Innovation
Office within the Artificial Intelligence National Laboratory of Hungary.
   
\bibliographystyle{abbrv}
\bibliography{spFD-key_approx} 
\end{document}